\theoremstyle{plain}
\newtheorem{theorem}{Theorem}
\theoremstyle{definition}
\begin{document}

\title{On the ultrametric generated by random distribution of points in Euclidean
spaces of large dimensions with correlated coordinates}

\author{A.\,P.~Zubarev \thanks{Physics Department, Samara State Aerospace University, Moskovskoe shosse 34, 443123, Samara, Russia. E-mail: apzubarev@mail.ru}
\thanks{Physics and Chemistry Department, Samara State University of Railway Transport, Perviy Bezimyaniy pereulok 18, 443066, Samara, Russia.}}
\maketitle
\begin{abstract}
In a recent paper the author proved a~theorem to the effect that the matrix of normalized Euclidean distances on the set of specially
distributed random points in the $n$-dimensional Euclidean space~$\mathbb R^{n}$ with independent coordinates converges in probability as
$n\rightarrow\infty$ to an ultrametric matrix, the latter being completely determined by the expectations of conditional variances of random coordinates of points.
The main theorem of the present paper extends this result to the case of weakly correlated coordinates of random points. Prior to formulating and stating this
result we give two illustrative examples describing particular algorithms of generation of such ultrametric spaces.
\end{abstract}

\section{Introduction}

One of the principal problems encountered in data processing is the problem of structuring of objects.
As a~rule, each object is represented by some feature vector. Besides, the data set is a~mapping from the set of objects into
the space of feature vectors for these objects.
Feature-based structuring of objects can be performed, for example, using agglomerative algorithms of hierarchic
clustering. As a~result of such clustering of objects one obtains a~quantitative
hierarchic classification thereof, which is equivalent to the definition of some indexed hierarchic structure (or an~ultrametric structure) on the set of objects~\cite{RTV}.
Such an approach is conventional for problems of classification of objects in terms of the closeness of their feature vectors.
It is of considerable interest to examine the data sets relating to a~set of objects which are naturally equipped with an explicit or implicit ultrametric structure.
If such an ultrametric structure is explicit, then this structure is directly felt in the data sets and
may be easily identified by analyzing the metric matrix defined on the space of  feature vectors of the objects. Nevertheless,  if the ultrametric structure
is implicit, then such an approach may fail to detect it. One possible approach to identify implicit ultrametric
structures is based on the transition from the initial data set to a~different data set (of possibly much smaller dimension), which corresponds to the choice of
new effective variables representing the space of features of the objects.

The principal problem in examining the systems with ultrametric structures is to identify the features of objects responsible for indexing of these
structures. When applied to specific systems  (data sets), such problems are generally nontrivial and have no universal solution algorithms.
Existence of ultrametric structures in data sets pertaining to systems of various natures was examined in a number of papers (see, for example, \cite{M1,M2,M3}).

We recall the definition of an ultrametric space. A set $M=\left\{ x\right\} $
is called a~metric space if, for any pair of elements,
$x^{a}, x^{b}\in M$, there is a~distance function between
$d_{ab}=d(x^{a},x^{b})$ (a~metric) satisfying the following conditions for any $x^{a},\, x^{b},\, x^{c}\in M$:
\begin{equation}
1)\, d_{ab}\ge0;\;2)\, d_{ab}=0\,\Leftrightarrow\, a=b;\;3)\, d_{ab}=d_{ba};\;4)\, d_{ab}\le d_{ac}+d_{dc}.\label{metric}
\end{equation}
A metric $d_{ab}$ satisfying the strong triangle inequality
\begin{equation}
d_{ab}\le\max\{d_{ac},\: d_{bc}\}\label{ultrametric}
\end{equation}
is called an ultrametric. A space with ultrametric is called a~space with ultrametric structure or an ultrametric
space. Endowing some set of objects with an~ultrametric structure is equivalent to specifying an indexed hierarchic structure~\cite{RTV}.
An arbitrary real matrix $d=\{d_{ab}\}$ is a~\textit{metric matrix} if its entries satisfy conditions~(\ref{metric}); it is called an \textit{ultrametric matrix} if
its entries satisfy conditions  (\ref{metric}) and~(\ref{ultrametric}).

One area of considerable promise is that of the search of possible mechanisms explaining the appearance of ultrametric structures in
systems of various natures. An efficient approach for attacking such problems is the use of the methods of analysis on ultrametric
spaces (or the ultrametric analysis). The ultrametric approach has been a~useful
tool for a~long time for solving various problems in the area of classification of objects and
information processing  of data sets~\cite{RTV}. The ultrametric analysis has been considerably advanced during the last 30 years---it received a~great impetus from the pioneering works of researches from the scientific school of Academician V.\,S.~Vladimirov, whose efforts were later taken up by
a~number of research works from different scientific schools (for an overview, consult, for example,~\cite{ALL}).
This relatively new research field is now known as the ``$p$-adic and ultrametric mathematical physics'' and is blessed by a~number of books
and an immense number of research articles in the area of $p$-adic analysis,
$p$-adic mathematical physics and their applications to modeling in various areas of physics, biology, computer science, sociology, physiology, etc.\ (see \cite{ALL,S,VVZ} and the references given therein).

In a number of studies dedicated to the application of methods of the ultrametric analysis
to real systems it was noted several times (see, for example, \cite{RTV,M3,R1,Hall,M4}) that the
correlation coefficients of sparse data sets of large
dimension have ultrametric properties. More exactly, it was shown that the matrices of normalized Euclidean distances between
randomly distributed points in a~multivariate space become close to ultrametric matrices as the dimension
of the space increases. The probabilistic justification of this fact based on the laws of large numbers was put forward in~\cite{Hall,M4}
for groups of random points with the same distribution in a~multivariate space.
In the recent paper~\cite{Z1} we formulated and proved a~general theorem stating that
the matrix of normalized Euclidean distances on the set of
specially distributed random points in the $n$-dimensional
Euclidean space $\mathbb R^n$ with independent  coordinates converges in probability
as $n\rightarrow\infty$ to the ultrametric matrix. The entries of this ultrametric matrix were given explicitly, and moreover,
were shown to be completely determined by the expectations of the conditional variances of the coordinates of random points.
In the present paper we extend the results of~\cite{Z1} and formulate and prove an analogous
theorem for the case of correlated coordinates of random points.

The paper is organized as follows. In the next section we, following~\cite{Z1}, give an illustrative example describing one particular
algorithm for constructing an ultrametric space by generating independent randomly distributed points in the $n$-dimensional  Euclidean
space with independent coordinates. In Section~3 we provide a~different illustrative example of the construction of an ultrametric space
by generating random points in the $n$-dimensional  Euclidean space,
in which the coordinates of random points are correlated in a~special way. In Section~4 we formulate and prove the main
theorem to the effect that the metric matrix of the set of independent
random points with correlated coordinates with special distribution
in the $n$-dimensional Euclidean space converges in probability, under certain conditions,
as $n\rightarrow\infty$ to the ultrametric matrix. Here, the ultrametric matrix is given in an explicit form, which is determined by the expectations of conditional variances
of random coordinates of points.

\section{Illustrative example~I. Independent coordinates}

We shall consider the following algorithm for construction of a~generation of independent
randomly distributed points in the $n$-dimensional Euclidean space with independent coordinates. This algorithm comprises the following  $N$-step procedure.

Let $p_{1}$, $p_{2}$, $\dots$, $p_{N}$ be a~sequence of natural numbers. At the first step we generate $p_{1}$ independent random
points $x^{(a_{1})}=\left(x_{1}^{(a_{1})},x_{2}^{(a_{1})},\ldots,x_{n}^{(a_{1})}\right)$
($a_{1}=1,2,\ldots,p_{1}$) in the $n$-dimensional space $\mathbb R^n$ with
normal distribution $\mathcal{N}\left(0,\sigma_{1}\right)$
for each independent coordinate. The coordinates of different points
are also assumed to be independent. At the second step we generate $p_{1}p_{2}$
independent random points $x^{(a_{1}a_{2})}=$$\left(x_{1}^{(a_{1}a_{2})},x_{2}^{(a_{1}a_{2})},\ldots,x_{n}^{(a_{1}a_{2})}\right)$
(here, $a_{1}=1,2,\ldots,p_{1}$, $a_{2}=1,2,\ldots,p_{2}$ and $a_{1}a_{2}$ is a~two-dimensional index) in~$\mathbb R^n$ with normal distribution $\mathcal{N}\left(x_{i}^{(a_{1})},\sigma_{2}\right)$
for the $i$th coordinate. Next, at the third step we generate $p_{1}p_{2}p_{3}$
independent random points $\left(x_{1}^{(a_{1}a_{2}a_{3})},x_{2}^{(a_{1}a_{2}a_{3})},\right.$$\left.\ldots,x_{n}^{(a_{1}a_{2}a_{3})}\right)$
($a_{1}=1,2,\ldots,p_{1}$, $a_{2}=1,2,\ldots,p_{2}$, $a_{3}=1,2,\ldots,p_{3}$)
in~$\mathbb R^n$ with normal distribution  $\mathcal{N}\left(x_{i}^{(a_{1}a_{2})},\sigma_{3}\right)$
for the $i$th coordinate, and so on. We shall repeat $N$~times this procedure of generation
of random  points. At the last $N$th step we generate
$p_{1}p_{2}\cdots p_{N}$ independent random  points $x^{(a_{1}a_{2}\cdots a_{N})}=$
$\left(x_{1}^{(a_{1}a_{2}\cdots a_{N})},x_{2}^{(a_{1}a_{2}\cdots a_{N})},\right.$
$\left.\ldots,x_{n}^{(a_{1}a_{2}\cdots a_{N})}\right)$ ($a_{1}=1,2,\ldots,p_{1}$,
$a_{2}=1,2,\ldots,p_{2}$, $\ldots,$ $a_{N}=1,2,\ldots,p_{N}$) in~$\mathbb R^n$ with normal distribution
$\mathcal{N}\left(x_{i}^{(a_{1}a_{2}\cdots a_{N-1})},\sigma_{N}\right)$ for the $i$th coordinate.

The set of points $M_{n}^{(N)}=\left\{ x^{(a_{1}a_{2}\cdots a_{N})}\right\} $ forms a~metric space with the metric
\begin{equation}
d_{n}\left(x^{(a_{1}a_{2}\cdots a_{N})},x^{(b_{1}b_{2}\cdots b_{N})}\right)=\dfrac{1}{\sqrt{n}}\sqrt{\sum_{i=1}^{n}\left(x_{i}^{(a_{1}a_{2}\cdots a_{N})}-x_{i}^{(b_{1}b_{2}\cdots b_{N})}\right)^{2}}.\label{d_n}
\end{equation}
The natural question here is how close is the metric matrix~(\ref{d_n}) to the ultrametric one for various~$n$? Below we shall give the results of
numerical simulations in which the number of steps of the procedure is $N=3$.

We let $d_{n}^{\left(p_{1},p_{2},\dots,p_{N}\right)}\left(N\right)$
denote the metric matrix that was numerically generated in accordance with the above procedure,
where $N$~is the number of steps in the procedure, $p_{i}$, $i=1,2,\dots,N$, is the number of points at the $i$th step of the procedure. Below we give the results of
calculation of an arbitrary random realization of the metric matrix $d_{n}^{\left(p_{1},p_{2},p_{3}\right)}\left(3\right)$
with fixed values of the variance $\sigma_{1}=\sigma_{2}=\sigma_{3}=10$
and when the dimensions~$n$ of the space~$\mathbb R^n$ are, respectively, $n=10$, $n=10^{2}$ and $n=10^{3}$.
\[
d_{10}^{\left(2,2,2\right)}\left(3\right)=\left(\begin{array}{cccccccc}
0 & 9.33 & 19.52 & 19.01 & 26.37 & 27.45 & 21.24 & 29.58\\
9.33 & 0 & 16.28 & 16.21 & 25.49 & 24.86 & 20.88 & 28.60\\
19.23 & 16.28 & 0 & 7.41 & 29.44 & 28.84 & 27.86 & 31.32\\
19.01 & 16.21 & 7.41 & 0 & 25.08 & 24.58 & 24.81 & 27.92\\
26.37 & 25.49 & 29.44 & 25.08 & 0 & 8.76 & 20.26 & 19.27\\
27.45 & 24.86 & 28.84 & 24.58 & 8.76 & 0 & 17.55 & 17.29\\
21.24 & 20.88 & 27.86 & 24.81 & 20.26 & 17.55 & 0 & 16.60\\
29.58 & 28.60 & 31.31 & 27.92 & 19.27 & 17.29 & 16.60 & 0
\end{array}\right),
\]

\[
d_{100}^{\left(2,2,2\right)}\left(3\right)=\left(\begin{array}{cccccccc}
0 & 13.78 & 19.41 & 19.93 & 27.80 & 27.79 & 24.85 & 25.07\\
13.78 & 0 & 19.04 & 20.75 & 28.83 & 28.40 & 26.13 & 25.53\\
19.41 & 19.04 & 0 & 14.78 & 29.68 & 28.17 & 26.57 & 26.33\\
19.93 & 20.75 & 14.78 & 0 & 30.91 & 30.00 & 27.22 & 26.51\\
27.80 & 28.83 & 29.68 & 30.91 & 0 & 12.93 & 19.01 & 20.07\\
27.79 & 28.40 & 28.17 & 30.00 & 12.93 & 0 & 18.23 & 19.45\\
24.86 & 26.13 & 26.58 & 27.22 & 19.01 & 18.23 & 0 & 14.20\\
25.07 & 25.53 & 26.33 & 26.51 & 20.07 & 19.45 & 14.20 & 0
\end{array}\right),
\]

\[
d_{1000}^{\left(2,2,2\right)}\left(3\right)=\left(\begin{array}{cccccccc}
0 & 14.13 & 19.59 & 19.86 & 24.60 & 23.93 & 25.14 & 24.90\\
14.13 & 0 & 19.93 & 19.93 & 24.56 & 24.31 & 25.41 & 24.89\\
19.59 & 19.93 & 0 & 14.22 & 24.69 & 24.74 & 24.92 & 24.80\\
19.86 & 19.93 & 14.22 & 0 & 23.75 & 24.04 & 24.28 & 24.19\\
24.60 & 24.56 & 24.69 & 23.75 & 0 & 14.43 & 19.52 & 19.88\\
23.93 & 24.31 & 24.74 & 24.04 & 14.43 & 0 & 19.37 & 19.46\\
24.14 & 25.41 & 24.92 & 24.28 & 19.52 & 19.37 & 0 & 14.01\\
24.90 & 24.89 & 24.80 & 24.19 & 19.88 & 19.46 & 14.01 & 0
\end{array}\right).
\]

It is seen that with increasing $n$ the random realization of the matrix $d_{n}^{\left(2,2,2\right)}\left(3\right)$
becomes more and more close to the ultrametric matrix.
Using the results of~\cite{Z1} one may show that as $n\rightarrow\infty$
the random realization of the matrix $d_{n}^{\left(2,2,2\right)}\left(3\right)$
will approach in probability the ultrametric matrix of the form
\[
d_{\infty}^{\left(2,2,2\right)}\left(3\right)=10\cdot\left(\begin{array}{cccccccc}
0 & \sqrt{2} & 2 & 2 & \sqrt{6} & \sqrt{6} & \sqrt{6} & \sqrt{6}\\
\sqrt{2} & 0 & 2 & 2 & \sqrt{6} & \sqrt{6} & \sqrt{6} & \sqrt{6}\\
2 & 2 & 0 & \sqrt{2} & \sqrt{6} & \sqrt{6} & \sqrt{6} & \sqrt{6}\\
20 & 20 & \sqrt{2}\cdot10 & 0 & \sqrt{6} & \sqrt{6} & \sqrt{6} & \sqrt{6}\\
\sqrt{6} & \sqrt{6} & \sqrt{6} & \sqrt{6} & 0 & \sqrt{2} & 2 & 2\\
\sqrt{6} & \sqrt{6} & \sqrt{6} & \sqrt{6} & \sqrt{2} & 0 & 2 & 2\\
\sqrt{6} & \sqrt{6} & \sqrt{6} & \sqrt{6} & 2 & 2 & 0 & \sqrt{2}\\
\sqrt{6} & \sqrt{6} & \sqrt{6} & \sqrt{6} & 2 & 2 & \sqrt{2} & 0
\end{array}\right)
\]

For any finite metric space~$\mathcal{U}$ one may define the so-called  ``degree of ultrametricity'' or the ``ultrametric measure'' of the space~$\mathcal{U}$ as some
value, which quantitatively determines the closeness of the metric matrix of the space~$\mathcal{U}$ to the ultrametric matrix. Various approaches to the definition of the
degree of ultrametricity were discussed in a~series of papers \cite{M1,R1,Z1,Lerman,M5,Missarov}.
In the present paper we shall use the definition proposed in~\cite{Missarov}. We recall this definition.

Let $\mathcal{U}$~be
a~finite metric space with elements $x^{a},\: a=1,2,\ldots,N$ and let  $d_{ab}\equiv d\left(x^{a},x^{b}\right)$
be a~metric on~$\mathcal{U}$. For any three points (triangle) $\left\{ x^{a},x^{b},x^{c}\right\} $ in~$\mathcal{U}$ we consider the function
\[
u(a,b,c)=2\dfrac{\mathrm{mid}\left\{ d_{ab},d_{bc},d_{ca}\right\} }{\max\left\{ d_{ab},d_{bc},d_{ca}\right\} }-1,
\]
where $\mathrm{mid}\left\{ d_{ab},d_{bc},d_{ca})\right\} $ is the length of the middle-length side of the triangle $\left\{ x^{a},x^{b},x^{c}\right\} $.
The quantity $u(a,b,c)$ will be called the \emph{degree of ultrametricity} of the triangle $\left\{ x^{a},x^{b},x^{c}\right\} $. The \emph{degree of
ultrametricity of a~metric space} $\mathcal{U}$ is the number~$U$, which is defined as the average of $u(a,b,c)$ over all possible triangles in~$\mathcal{U}$
that are distinct from $\left\{ x^{a},x^{b},x^{c}\right\} $:
\[
U=\frac{3!(N-3)!}{N!}\sum_{a=1}^{N}\sum_{b=a+1}^{N}\sum_{c=b+1}^{N}u(a,b,c).
\]
We note that $0\leq U\leq1$, where $U=1$ if $\mathcal{U}$ is an ultrametric space.

Figure~\ref{fig1} shows the pointwise dependence of the degree of ultrametricity~$U$ for the random realization of the  space of
points generated by the above procedure versus the dimension~$n$ of the Euclidean space~$\mathbb R^n$. The following parameters were
chosen: $N=3$, $p_{1},=p_{2}=p_{3}=2$, $\sigma_{1}=\sigma_{2}=\sigma_{3}=10$.
In this graph, to each value of~$n$ (in the range between $n=4$ and $n=10^{3}$) there corresponds one point associated with
one arbitrary realization of the ultrametric matrix $d_{n}^{\left(2,2,2\right)}\left(3\right)$.

\begin{center}
\begin{figure}[b]
\includegraphics[width=16cm]{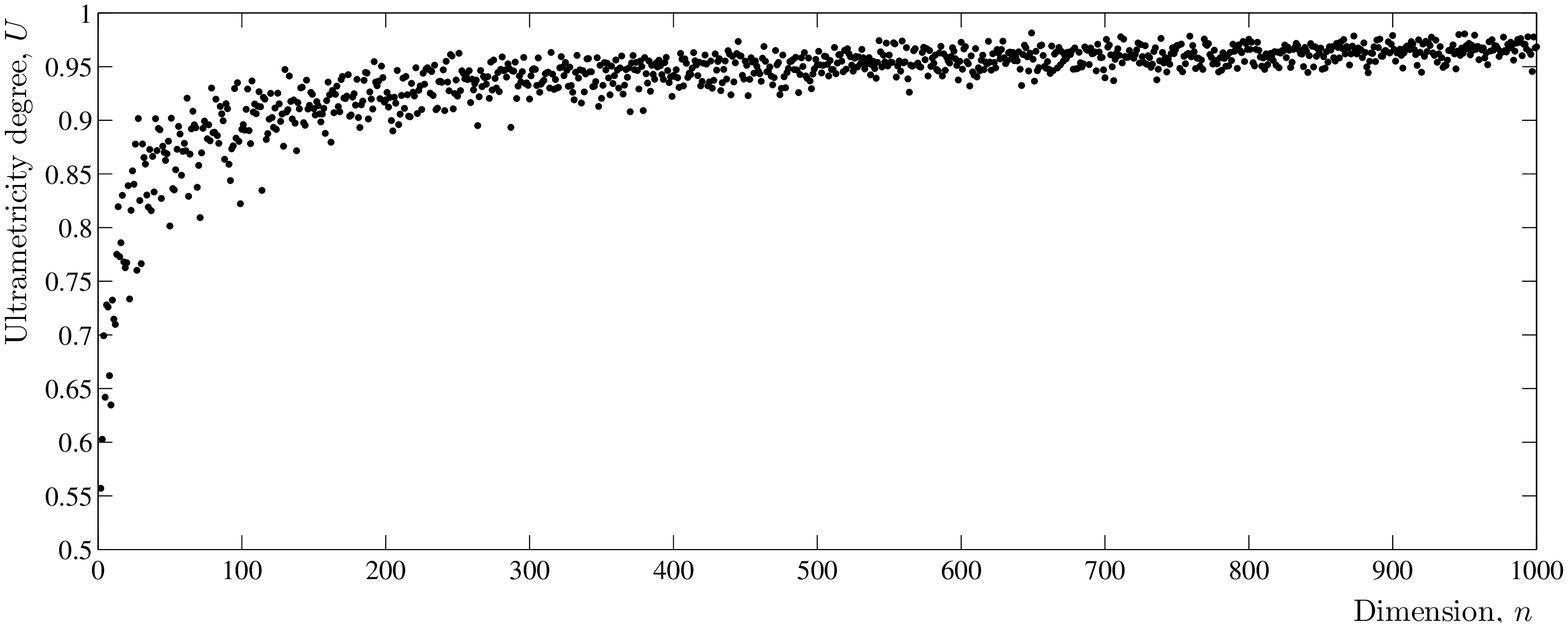}
\protect\caption{The degree of ultrametricity $U$ of the random realization of the metric space of points versus the dimension~$n$ of the Euclidean space~$\mathbb R^n$ with the following parameters of the generation
procedure: $N=3$, $p_{1},=p_{2}=p_{3}=2$, $\sigma_{1}=\sigma_{2}=\sigma_{3}=10$. }

\label{fig1}
\end{figure}

\par\end{center}

\section{Illustrative example II. Dependent coordinates}

In this section we shall consider the algorithm of generation of a~random points
with dependent coordinates in the $n$-dimensional Euclidean space; this algorithm is
a~modification of the algorithm considered in the previous section. In the framework of this modification
it is ensured that any two of $n$~coordinates of each random point
are nontrivially correlated. For our purposes it will be convenient to use the so-called hierarchic (cluster)
correlation of coordinates of each random point, the construction thereof will be described later.
In this construction the entire family of coordinates of each  random point is partitioned into hierarchically nested groups, which will be
called clusters. For such a~partition each coordinate can be associated with an element of some finite set~$M$ (a~finite ultrametric space).
Under this approach we shall assume that the covariance of two coordinates is defined by
the ultrametric distance between the elements of the space~$M$ that corresponds to these coordinates. It is worth pointing out that the ultrametric
on the space~$M$ associated with the set of coordinates of the  $n$-dimensional Euclidean space~$\mathbb R^n$ has no relation to the
ultrametric that arises on the set of random points in~$\mathbb R^n$.
The only reason for us to introduce an ultrametric on the space~$M$ is to provide for a~nontrivial
correlation  between the coordinates of each random point of~$\mathbb R^n$.

We set $n=m^{k}$, where $m$ and $k$ are positive integer numbers. The entire set of coordinates $\left\{ x_{1},x_{2},\ldots,x_{n}\right\} $
will be called the zero level cluster. We partition the zero level cluster $\left\{ x_{1},x_{2},\ldots,x_{n}\right\} $
into~$m$ groups
\[
\left\{ x_{1},x_{2},\ldots,x_{m^{k-1}}\right\} ,\,\,\left\{ x_{m^{k-1}+1},x_{m^{k-1}+2},\ldots,x_{2m^{k-1}}\right\} ,\ldots,
\]
\[
\left\{ x_{\left(m-1\right)m^{k-1}+1},x_{\left(m-1\right)m^{k-1}+2},\ldots,x_{m^{k}}\right\} ,
\]
which will be called the first level clusters and which will be indexed by an index $i_{1}=1,2,\ldots,m$. In turn, we partition each first level cluster
into $m$~subgroups, which will be called the second level clusters. For example, we partition the cluster $\left\{ x_{1},x_{2},\ldots,x_{m^{k-1}}\right\} $
into the subclusters
\[
\left\{ x_{1},x_{2},\ldots,x_{m^{k-2}}\right\} ,\,\,\left\{ x_{m^{k-2}+1},x_{m^{k-2}+2},\ldots,x_{2m^{k-2}}\right\} ,\ldots,
\]
\[
\left\{ x_{\left(m-1\right)m^{k-2}+1},x_{\left(m-1\right)m^{k-1}+2},\ldots,x_{m^{k-1}}\right\} .
\]
Second level clusters will be indexed by a~two-dimensional index $\left(i_{1}i_{2}\right)$,
$i_{1},i_{2}=1,2,\ldots m$, where $i_{1}$ is the number of a~first level cluster which contains the second level cluster, $i_{2}$ is the number
 of a~second level cluster inside the $i_{1}$th first level cluster.
Next, we partition each second level cluster into $m$~subgroups, which will be called third level clusters. Third level clusters
will be indexed by a~three-dimensional index  $\left(i_{1}i_{2}i_{3}\right)$, $i_{1},i_{2},i_{3}=1,2,\ldots m$. We continue this process of partition
until we get the $(k-1)$th level clusters of which each contains  $m$~coordinates. Thus, under such a~partition all the coordinates
are united into hierarchically nested clusters. Moreover, each coordinate which lies in an $i_{1}$th first level cluster,
in an $i_{2}$th  second level cluster, $\ldots,$ and in an $i_{k-1}$th  $(k-1)$th level cluster can be indexed by a~multiindex  $\alpha=\left(i_{1},i_{2},\ldots,i_{k}\right)$,
$i_{1},i_{2},\ldots,i_{k}=1,2,\ldots m$.

Let $\xi_{i_{1}i_{2}\ldots i_{k}}$, $\xi_{i_{1}i_{2}\ldots i_{k-1}}$, $\ldots,$$\xi_{i_{1}i_{2}}$, $\xi_{i_{1}}$, be families of independent random
variables distributed according to the normal distribution law $\mathcal{N}\left(0,\sigma\right)$.
We define the random coordinates of a~point~$x$ in~$\mathbb R^n$, $n=m^{k}$, as
\begin{equation}
x_{i_{1}i_{2}\ldots i_{k}}=\xi_{i_{1}i_{2}\ldots i_{k}}+\lambda^{-1}\xi_{i_{1}i_{2}\ldots i_{k-1}}+\lambda^{-2}\xi_{i_{1}i_{2}\ldots i_{k-2}}+\cdots+\lambda^{-k+2}\xi_{i_{1}i_{2}}+\lambda^{-k+1}\xi_{i_{1}},\label{x}
\end{equation}
where $\lambda$ is the control correlation parameter of  coordinates of random points. It is easily seen that the expectation and variance
of~(\ref{x}) are as follows
\[
\mathsf{E}\left[x_{i_{1}i_{2}\ldots i_{k}}\right]=0,\qquad \mathsf{Var}\left[x_{i_{1}i_{2}\ldots i_{k}}\right]=\sigma^{2}\sum_{j=0}^{k-1}\lambda^{-2j}=\sigma^{2}\dfrac{1-\lambda^{-2k}}{1-\lambda^{-2}}.
\]
One can also easily calculate the covariance of two random coordinates of~(\ref{x})
of the form $x_{i_{1}i_{2}\ldots i_{r-1}i_{r}i_{r+1}\ldots i_{k}}$ and $x_{i_{1}i_{2}\ldots i_{r-1}j_{r}j_{r+1}\ldots j_{k}}$
for various indexes $i_{r}i_{r+1}\ldots i_{k}$ and $j_{r}j_{r+1}\ldots j_{k}$:
\[
\mathsf{cov}\left[x_{i_{1}i_{2}\ldots i_{r-1}i_{r}i_{r+1}\ldots i_{k}},x_{i_{1}i_{2}\ldots i_{r-1}j_{r}j_{r+1}\ldots j_{k}}\right]=\sigma^{2}\sum_{j=r}^{k-1}\lambda^{-2j}=\sigma^{2}\lambda^{-2r}\dfrac{1-\lambda^{-2\left(k-r\right)}}{1-\lambda^{-2}}.
\]
We note that, for $\lambda>1$,
\[
\mathsf{Var}\left[x_{i_{1}i_{2}\ldots i_{k}}\right]\underset{{\scriptstyle n\rightarrow\infty}}{\longrightarrow}\dfrac{\sigma^{2}}{1-\lambda^{-2}},
\]
\[
\mathsf{cov}\left[x_{i_{1}i_{2}\ldots i_{r-1}i_{r}i_{r+1}\ldots i_{k}},x_{i_{1}i_{2}\ldots i_{r-1}j_{r}j_{r+1}\ldots j_{k}}\right]\underset{{\scriptstyle n\rightarrow\infty}}{\longrightarrow}\dfrac{\sigma^{2}\lambda^{-2r}}{1-\lambda^{-2}}.
\]

Let $p_{1}$, $p_{2}$, $\dots$, $p_{N}$ be a set of natural numbers. We generate $p_{1}$ independent random points $y^{(a_{1})}$
($a_{1}=1,2,\ldots,p_{1}$ ) in the $n$-dimensional space~$\mathbb R^n$ so that the coordinates $y_{i_{1}i_{2}\ldots i_{k}}^{(a_{1})}$ of each point
$x^{(a_{1})}$ are defined as $y_{i_{1}i_{2}\ldots i_{k}}^{(a_{1})}=x_{i_{1}i_{2}\ldots i_{k}}^{(a_{1})}$,
where $x_{i_{1}i_{2}\ldots i_{k}}^{(a_{1})}$ are defined by~(\ref{x}).
Next, we generate $p_{1}p_{2}$ independent random points $y^{(a_{1}a_{2})}$ (here, $a_{1}=1,2,\ldots,p_{1}$, $a_{2}=1,2,\ldots,p_{2}$, and $a_{1}a_{2}$
is a~two-dimensional index) in~$\mathbb R^n$. Besides each random coordinate
$y_{i_{1}i_{2}\ldots i_{k}}^{(a_{1}a_{2})}$ is defined as $y_{i_{1}i_{2}\ldots i_{k}}^{(a_{1}a_{2})}=x_{i_{1}i_{2}\ldots i_{k}}^{(a_{1}a_{2})}+y_{i_{1}i_{2}\ldots i_{k}}^{(a_{1})}$,
where $x_{i_{1}i_{2}\ldots i_{k}}^{(a_{1}a_{2})}$ are defined by~(\ref{x}).
Next, we generate $p_{1}p_{2}p_{3}$ independent random points
$y^{(a_{1}a_{2}a_{3})}$ ($a_{1}=1,2,\ldots,p_{1}$, $a_{2}=1,2,\ldots,p_{2}$,
$a_{3}=1,2,\ldots,p_{3}$) in~$\mathbb R^n$, for which
$y_{i_{1}i_{2}\ldots i_{k}}^{(a_{1}a_{2}a_{3})}=x_{i_{1}i_{2}\ldots i_{k}}^{(a_{1}a_{2}a_{3})}+y_{i_{1}i_{2}\ldots i_{k}}^{(a_{1}a_{2})}$,
where $x_{i_{1}i_{2}\ldots i_{k}}^{(a_{1}a_{2}a_{3})}$ are defined as in~(\ref{x}), and so~on. We repeat $N$~times this procedure of generation of random
points. At the next step, we generate $p_{1}p_{2}\cdots p_{N}$
independent random points ($a_{1}=1,2,\ldots,p_{1}$, $a_{2}=1,2,\ldots,p_{2}$,
$\ldots,$ $a_{N}=1,2,\ldots,p_{N}$) in~$\mathbb R^n$ for which $y_{i_{1}i_{2}\ldots i_{k}}^{(a_{1}a_{2}\cdots a_{N})}=x_{i_{1}i_{2}\ldots i_{k}}^{(a_{1}a_{2}\cdots a_{N})}+y_{i_{1}i_{2}\ldots i_{k}}^{(a_{1}a_{2}\cdots a_{N-1})}$,
where $x_{i_{1}i_{2}\ldots i_{k}}^{(a_{1}a_{2}\cdots a_{N})}$ are defined as in~(\ref{x}). The set of points $\mathcal{U}_{n}^{(N)}=\left\{ x^{(a_{1}a_{2}\cdots a_{N})}\right\} $
forms a~metric space with the metric
\begin{equation}
d_{n}\left(x^{(a_{1}a_{2}\cdots a_{N})},x^{(b_{1}b_{2}\cdots b_{N})}\right)\equiv\dfrac{1}{\sqrt{n}}\sqrt{\sum_{i_{1},i_{2},\ldots,i_{k}=1}^{m}\left(x_{i_{1}i_{2}\ldots i_{k}}^{(a_{1}a_{2}\cdots a_{N})}-x_{i_{1}i_{2}\ldots i_{k}}^{(b_{1}b_{2}\cdots b_{N})}\right)^{2}}.\label{d_n_corr}
\end{equation}

We next give the result of numerical simulations for the study of the dependence of the ultrametricity degree~$U$ of the metric  space $\mathcal{U}_{n}^{(N)}$
on the dimension~ $n$ of the Euclidean space $\mathbb R^n$ for various values of the parameter of correlation of coordinates~$\lambda$. Clearly,
in the limit $\lambda\rightarrow\infty$ we have the case of independent  coordinates examined in the previous section, and hence we shall be concerned with the case when
$\lambda$~is close to~1. We shall choose the following values~$n$ of the dimension: $n=2^{i}$, $i=2,\ldots,11$. Next, for each
such~$n$ we generate $10$ realizations of the $8\times8$ metric matrix
$d_{n}\left(x^{(a_{1}a_{2}a_{3})},x^{(b_{1}b_{2}b_{3})}\right)$,
$a_{i},b_{i}=1,2$ with the following fixed values of the parameters:
$m=2$, $N=3$, $p_{1}=p_{2}=p_{3}=2$, $\sigma=10$ and with  $\lambda=0.8$, $\lambda=1.2$,
$\lambda=2$ and $\lambda=10$, respectively. Next, for each realization we calculate the degree of ultrametricity~$U$, and then calculate the average~$\overline{U}$
of the degree of ultrametricity~$U$ over all $10$ realizations. Figure~\ref{fig2} depicts the pointwise graphs of~$\overline{U}\left(n\right)$
for various~$\lambda$. These calculations show that, for weakly correlated coordinates of random points ($\lambda>1$), the degree of ultrametricity~$U$ of random realizations of
the metric matrix $d_{n}\left(x^{(a_{1}a_{2}a_{3})},x^{(b_{1}b_{2}b_{3})}\right)$
tends in probability to~$1$ with increasing~$n$ (see the points corresponding to $\lambda=1.2$, $\lambda=2$, and $\lambda=10$ in  Fig.~\ref{fig2}).
Nevertheless, for strong correlations of coordinates of random points ($\lambda\leq1$),
the degree of ultrametricity in the random realizations of the metric matrix is not increasing with is increasing~ $n$  (see the points corresponding to $\lambda=0.8$ in Fig.~\ref{fig2}).
In the next section we shall formulate and prove the main  theorem, which gives
conditions on the correlation functions of coordinates of random points ensuring that the metric matrix of these points converges in probability
to the ultrametric matrix in the limit $n\rightarrow\infty$.

\begin{center}
\begin{figure}[b]
\includegraphics[width=16cm]{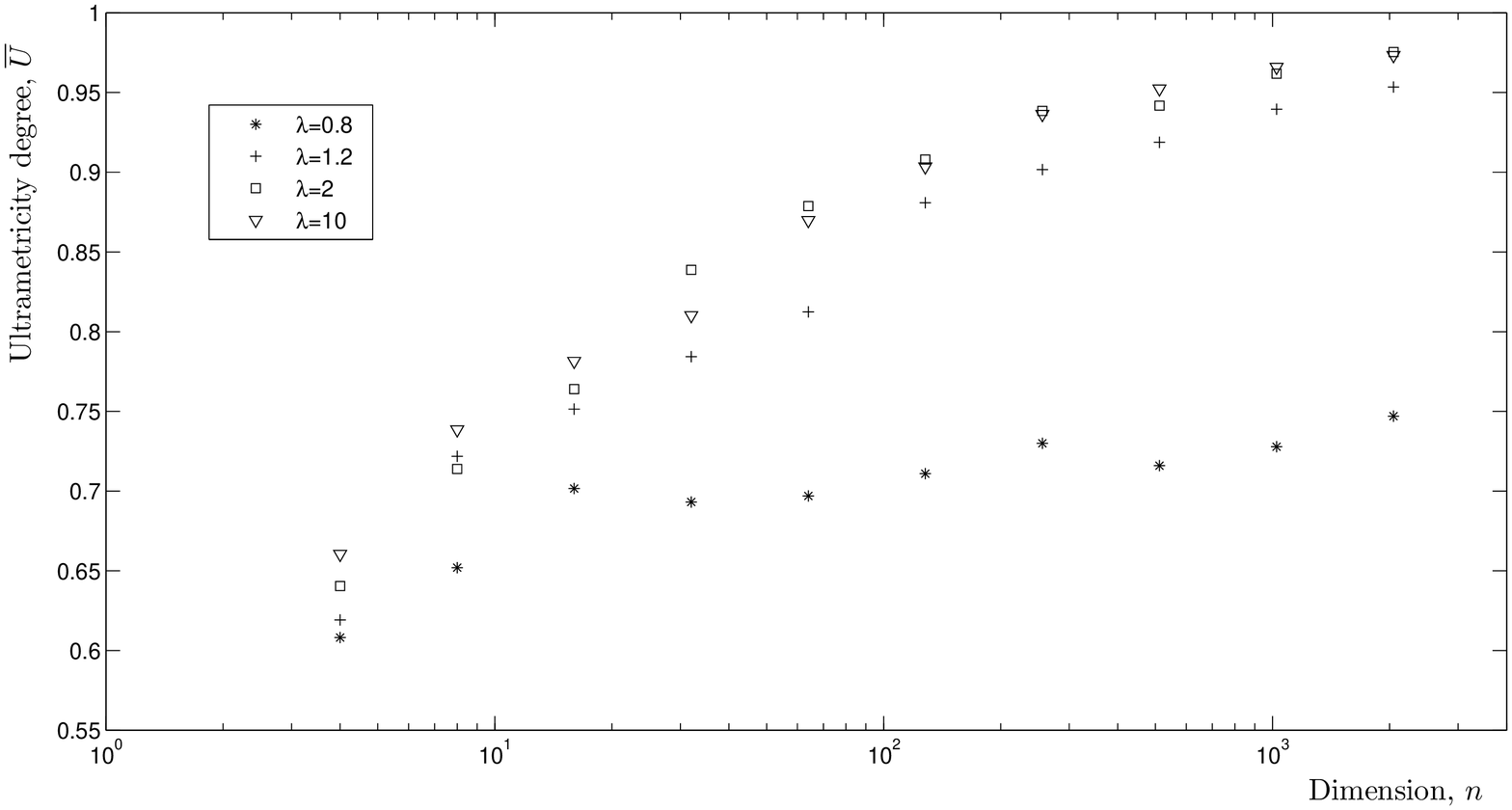}
\protect\caption{The degree of ultrametricity
$\overline{U}$ of the metric space $\mathcal{U}_{n}^{(N)}$, as averaged over 10 realizations, versus
the dimension~$n$ of the Euclidean space $\mathbb R^n$ with the parameters
of generation of random points $N=3$, $p_{1},=p_{2}=p_{3}=2$,
$\sigma_{1}=\sigma_{2}=\sigma_{3}=10$ and when the correlation parameter
of~coordinates $\lambda$ equals to $0.8$, $1.2$, $2$ and $10$. }

\label{fig2}
\end{figure}

\par\end{center}

\section{The main theorem}

First we need some results from probability that will be used to prove the main theorem.

Let $\left\{ \Omega,\Sigma,\mathrm{P}\right\} $ be a probability
space, where $\left\{ \Omega,\Sigma\right\}$ is measurable
space, $\mathrm{P}$ is probability measure. A~real random variable
$X$ is a~measurable mapping $X:\:\Omega\to R$. For any real random
variable $X=X(\omega)$ one may define the integral $\intop_{A}X\left(\omega\right)d\mathrm{P}\left(\omega\right)$,
$A\in\Sigma$. The expectation and the variance of~$X$
are defined, respectively, as $\mathsf{E}\left[X\right]=\intop_{\Omega}X\left(\omega\right)d\mathrm{P}\left(\omega\right)$
and $\mathsf{V}\left[X\right]=\mathsf{E}\left[X^{2}\right]-\left(\mathsf{E}\left[X\right]\right)^{2}$. Let $\Sigma^{(1)}\subset\Sigma$ be a $\sigma$-subalgebra
of $\Sigma$, then the conditional expectation $\mathsf{E}\left[X\left|\Sigma^{(1)}\right.\right]$
of a~real random variable~$X$ is a~random variable $Y$ such that $Y$~is $\Sigma^{(1)}$ measurable and, for all $A\in\Sigma^{(1)}$, $\intop_{A}X\left(\omega\right)d\mathrm{P}\left(\omega\right)=\intop_{A}Y\left(\omega\right)d\mathrm{P}\left(\omega\right)$,
and the conditional variance $\mathsf{Var}\left[X\left|\Sigma^{(1)}\right.\right]$
is defined as $\mathsf{Var}\left[X\left|\Sigma^{(1)}\right.\right]=\mathsf{E}\left[X^{2}\left|\Sigma^{(1)}\right.\right]-\left(\mathsf{E}\left[X\left|\Sigma^{(1)}\right.\right]\right)^{2}$.

\begin{theorem}
{\rm (Markov's theorem, \cite{Shiryaev})} Let $X_{1},X_{2},\ldots$
be a sequence of dependent random variables with finite expectations
$\mathsf{E}\left[X_{i}\right]\equiv m_{i}$, let  $\dfrac{\mathsf{Var}\left[\sum_{i=1}^{n}X_{i}\right]}{n^{2}}\rightarrow0$
as $n\rightarrow\infty$, and let $S_{n}=X_{1}+X_{2}+\ldots+X_{n}$.
Then $\dfrac{S_{n}}{n}\overset{\mathrm{P}}{\rightarrow}\dfrac{\left\langle S_{n}\right\rangle }{n}$;
i.e., for any $\varepsilon>0$ one has $\mathrm{P}\left\{ \left|\dfrac{S_{n}}{n}-\dfrac{\left\langle S_{n}\right\rangle }{n}\right|\geq\varepsilon\right\} \rightarrow0$
as $n\rightarrow\infty$ (the convergence in probability).
 \label{th1}
\end{theorem}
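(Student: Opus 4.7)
The plan is to derive this as an immediate consequence of Chebyshev's inequality applied to the centered and normalized partial sum. Specifically, I would consider the random variable $S_n/n - \mathsf{E}[S_n]/n$, observe that its expectation is zero, and compute its variance as $\mathsf{Var}[S_n]/n^{2}$ using the scaling property of variance (variance is unaffected by a constant shift, and scales quadratically under multiplication by $1/n$).

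Next, for a fixed $\varepsilon > 0$, I would invoke Chebyshev's inequality in the form
\[
\mathrm{P}\left\{\left|\frac{S_n}{n} - \frac{\mathsf{E}[S_n]}{n}\right| \geq \varepsilon\right\} \leq \frac{1}{\varepsilon^{2}} \, \mathsf{Var}\!\left[\frac{S_n}{n}\right] = \frac{\mathsf{Var}[S_n]}{n^{2}\varepsilon^{2}}.
\]
The hypothesis that $\mathsf{Var}[\sum_{i=1}^{n} X_i]/n^{2} \to 0$ as $n\to\infty$ then forces the right-hand side to tend to zero, which is precisely the claimed convergence in probability. Note that the hypothesis also implicitly guarantees that the variances (and hence the second moments) of the partial sums are finite for each $n$, so that Chebyshev's inequality applies.

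There is essentially no obstacle: the proof is a one-line application of Chebyshev's inequality together with the hypothesis. The only subtlety worth flagging is that, because the $X_i$ are dependent, one cannot simplify $\mathsf{Var}[S_n]$ to $\sum_i \mathsf{Var}[X_i]$; the covariance cross-terms are present and the hypothesis is stated directly on $\mathsf{Var}[S_n]$ precisely so that this issue is folded into the assumption rather than the proof. The step I would be most careful about in the write-up is the precise identification $\mathsf{Var}[S_n/n] = \mathsf{Var}[S_n]/n^{2}$ and the justification that $\mathsf{E}[S_n/n] = \mathsf{E}[S_n]/n$, both of which follow from linearity of expectation and the standard scaling rule for variance.
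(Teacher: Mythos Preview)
Your argument is correct and is the standard proof of Markov's weak law via Chebyshev's inequality. Note, however, that the paper does not supply its own proof of this statement: Theorem~\ref{th1} is merely quoted from Shiryaev's textbook as a known tool to be invoked later in the proof of Theorem~\ref{th3}, so there is nothing to compare against beyond confirming that your reasoning matches the classical argument.
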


\begin{theorem}
{\rm (Slutsky's theorem, \cite{Sl,Bill})} If $X_{n}^{(1)}\overset{\mathrm{P}}{\rightarrow}X^{(1)}$,
$X_{n}^{(2)}\overset{\mathrm{P}}{\rightarrow}X^{(2)}$, $\ldots,$
$X_{n}^{(N)}\overset{\mathrm{P}}{\rightarrow}X^{(N)}$, and $h\left(x^{(1)},x^{(2)},\ldots,x^{(N)}\right)$
is a continuous function of $N$ variables, then 
$$h\left(X_{n}^{(1)},X_{n}^{(2)},\ldots,X_{n}^{(N)}\right)
\overset{\mathrm{P}}{\rightarrow}
h\left(X^{(1)},X^{(2)},\ldots,X^{(N)}\right).$$
 \label{th2}
\end{theorem}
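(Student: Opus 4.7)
The plan is to deduce Slutsky's theorem in two stages: first upgrade the componentwise convergence in probability to joint convergence of the vector $\mathbf{X}_{n}=(X_{n}^{(1)},\ldots,X_{n}^{(N)})$ toward $\mathbf{X}=(X^{(1)},\ldots,X^{(N)})$ in $\mathbb{R}^{N}$, and then transfer this convergence through $h$ using its continuity together with a truncation to compact sets.

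For the first stage I would use the elementary inclusion
\begin{equation*}
\{\|\mathbf{X}_{n}-\mathbf{X}\|_{\infty}\geq\delta\}\subseteq\bigcup_{i=1}^{N}\{|X_{n}^{(i)}-X^{(i)}|\geq\delta\},
\end{equation*}
so that a union bound together with the componentwise hypotheses immediately gives $\mathrm{P}\{\|\mathbf{X}_{n}-\mathbf{X}\|\geq\delta\}\to0$ for every $\delta>0$; equivalence of norms on $\mathbb{R}^{N}$ lets me work with the Euclidean norm thereafter.

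For the second stage I would fix $\varepsilon,\eta>0$. Since $\mathbf{X}$ takes values in $\mathbb{R}^{N}$, its distribution is tight, hence there is $R>0$ with $\mathrm{P}\{\|\mathbf{X}\|>R\}<\eta/2$. On the closed ball $\overline{B}(0,R+1)\subset\mathbb{R}^{N}$ the continuous function $h$ is uniformly continuous, so there exists $\delta\in(0,1)$ such that $|h(\mathbf{x})-h(\mathbf{y})|<\varepsilon$ whenever $\mathbf{x},\mathbf{y}\in\overline{B}(0,R+1)$ and $\|\mathbf{x}-\mathbf{y}\|<\delta$. On the event $\{\|\mathbf{X}\|\leq R\}\cap\{\|\mathbf{X}_{n}-\mathbf{X}\|<\delta\}$ both $\mathbf{X}$ and $\mathbf{X}_{n}$ lie in $\overline{B}(0,R+1)$, so $|h(\mathbf{X}_{n})-h(\mathbf{X})|<\varepsilon$. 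Consequently
\begin{equation*}
\mathrm{P}\{|h(\mathbf{X}_{n})-h(\mathbf{X})|\geq\varepsilon\}\leq\mathrm{P}\{\|\mathbf{X}\|>R\}+\mathrm{P}\{\|\mathbf{X}_{n}-\mathbf{X}\|\geq\delta\}<\eta/2+\eta/2=\eta
\end{equation*}
for all sufficiently large $n$, which is the required conclusion.

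The main obstacle is precisely the tightness/truncation step in the second stage: without restricting to the compact ball, one cannot convert the pointwise continuity of $h$ into the uniform estimate needed to pass from $\|\mathbf{X}_{n}-\mathbf{X}\|<\delta$ to $|h(\mathbf{X}_{n})-h(\mathbf{X})|<\varepsilon$. Everything else reduces to routine union-bound bookkeeping, and the argument uses no structural property of the $X_{n}^{(i)}$ beyond the given convergence in probability.
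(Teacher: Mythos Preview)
Your argument is correct and is the standard proof of the continuous mapping theorem for convergence in probability. However, the paper does not supply its own proof of this statement: Theorem~\ref{th2} is quoted as a known result with references to \cite{Sl,Bill} and is used only as a tool in the proof of Theorem~\ref{th3}. There is therefore no paper proof to compare against; your two-stage approach (union bound to pass from componentwise to joint convergence in probability, then tightness plus uniform continuity on a compact ball to push the convergence through~$h$) is exactly the sort of self-contained justification one would give if a proof were required.
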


The main theorem is as follows.
\begin{theorem}
 Let $\left\{ \Omega,\Sigma,\mathrm{P}\right\} $
be a~probability space, $\Sigma^{(n)}$ be an increasing sequence of
$\sigma$-subalgebras $\Sigma^{(1)}\subset\Sigma^{(2)}\subset\cdots\subset\Sigma^{(N)}\equiv\Sigma$.
Let $M_{n}\equiv\left\{ x^{(a_{1}a_{2}\cdots a_{N})}\right\} $ ($a_{1}=1,2,\ldots,p_{1}$,
$a_{2}=1,2,\ldots,p_{2}$, $\ldots,$ $a_{N}=1,2,\ldots,p_{N}$, and
$a_{1}a_{2}\cdots a_{N}$ is $N$-dimensional index) be sets of
$p_{1}p_{2}\cdots p_{N}$ independent random points $x^{(a_{1}a_{2}\cdots a_{N})}=\left(x_{1}^{(a_{1}a_{2}\cdots a_{N})},x_{2}^{(a_{1}a_{2}\cdots a_{N})},\ldots,x_{n}^{(a_{1}a_{2}\cdots a_{N})}\right)$
in $\mathbb R^n$ with generally dependent random coordinates. Assume that the conditional
expectations $\mathsf{E}\left[x_{i}^{(a_{1}a_{2}\cdots a_{N})}\left|\Sigma^{(k)}\right.\right]\equiv x_{i}^{(a_{1}a_{2}\cdots a_{k})}$
($k=1,2,\ldots,N-1$) are identical for all $a_{k+1},\: a_{k+2},\:\ldots,\: a_{N}$.
Next, assume that  $\mathsf{E}\left[\left(x_{i}^{(a_{1}a_{2}\cdots a_{N})}\right)^{m}\right]$,
$m=1,2,3,4$ and $\mathsf{cov}\left[x_{i}^{(a_{1}a_{2}\cdots a_{N})},x_{j}^{(a_{1}a_{2}\cdots a_{N})}\right]$
are uniformly bounded, $\mathsf{E}\left[x_{i}^{(a_{1}a_{2}\cdots a_{N})}\right]=m_{i}$,
$\mathsf{Var}\left[x_{i}^{(a_{1})}\right]=\sigma_{1}^{2}$,
$\mathsf{E}\left[\mathsf{Var}\left[x_{i}^{(a_{1}a_{2}\cdots a_{k})}\left|\Sigma^{(k)}\right.\right]\right]=\sigma_{k+1}^{2}$
($k=1,\ldots,N-1$) and the conditions
\[
\lim_{n\rightarrow0}\dfrac{1}{n^{2}}\sum_{\substack{i,j=1\\i<j}}^{n}\mathsf{cov}\left[\left(x_{i}^{(a_{1}a_{2}\cdots a_{N})}\right)^{l_{1}},\left(x_{j}^{(a_{1}a_{2}\cdots a_{N})}\right)^{l_{2}}\right]=0,\,\, l_{1},l_{2}=1,2
\]
 are hold. Then the metric on $M_{n}$
\[
d_{n}\left(x^{(a_{1}a_{2}\cdots a_{N})},x^{(b_{1}b_{2}\cdots b_{N})}\right)\equiv\dfrac{1}{\sqrt{n}}\sqrt{\sum_{i=1}^{n}\left(x_{i}^{(a_{1}a_{2}\cdots a_{N})}-x_{i}^{(b_{1}b_{2}\cdots b_{N})}\right)^{2}}
\]
 has the property
\[
d_{n}\left(x^{(a_{1}a_{2}\cdots a_{N})},x^{(b_{1}b_{2}\cdots b_{N})}\right)\overset{\mathrm{P}}{\longrightarrow}u_{a_{1}a_{2}\cdots a_{N},b_{1}b_{2}\cdots b_{N}}
\]
as $n\to\infty$, where
\[
u_{a_{1}a_{2}\cdots a_{N},b_{1}b_{2}\cdots b_{N}}=\sqrt{2}\left(\left(1-\delta_{a_{1}b_{1}}\delta_{a_{2}b_{2}}\cdots\delta_{a_{N}b_{N}}\right)\sigma_{N}^{2}+\right.
\]
\[
\left.+\left(1-\delta_{a_{1}b_{1}}\delta_{a_{2}b_{2}}\cdots\delta_{a_{N-1}b_{N-1}}\right)\sigma_{N-1}^{2}+\cdots+\left(1-\delta_{a_{1}b_{1}}\right)\sigma_{1}^{2}\right)^{\tfrac{1}{2}}
\]
 is an ultrametric $p_{1}p_{2}\cdots p_{N}\times p_{1}p_{2}\cdots p_{N}$--matrix.
 \label{th3}
\end{theorem}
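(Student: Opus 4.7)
The plan is to work with the squared distance $d_n^2(a,b)$, show via Markov's theorem (Theorem~\ref{th1}) that $d_n^2 \overset{\mathrm{P}}{\rightarrow} u_{a,b}^2$, and then upgrade to $d_n \overset{\mathrm{P}}{\rightarrow} u_{a,b}$ by Slutsky's theorem (Theorem~\ref{th2}) with the continuous map $h(t)=\sqrt{t}$. Fix a pair of indices $a$ and $b$ and let $r$ denote the length of their longest common prefix. The algebraic backbone will be a martingale decomposition of $x_i^{(a)}$ along the branching tree of conditional expectations: with $x_i^{(\emptyset)}:=m_i$ and $x_i^{(a_1\cdots a_N)}:=x_i^{(a)}$, define the level-$k$ increment
\[
\eta_i^{(k)}(a_1\cdots a_k) := x_i^{(a_1\cdots a_k)} - x_i^{(a_1\cdots a_{k-1})},\qquad k=1,\ldots,N.
\]
By hypothesis, $\mathsf{E}[\eta_i^{(k)}\mid\Sigma^{(k-1)}]=0$ and $\mathsf{E}[(\eta_i^{(k)})^2]=\sigma_k^2$, and after cancellation of the common ancestors
\[
x_i^{(a)} - x_i^{(b)} \;=\; \sum_{k=r+1}^{N}\bigl(\eta_i^{(k)}(a_1\cdots a_k) - \eta_i^{(k)}(b_1\cdots b_k)\bigr).
\]

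Setting $Y_i := (x_i^{(a)} - x_i^{(b)})^2$ and expanding the square, cross-level products $\eta_i^{(k)}\eta_i^{(l)}$ with $k<l$ have zero expectation (condition on $\Sigma^{(l-1)}$ and use the martingale property of the level-$l$ factor), and same-level cross-branch products $\eta_i^{(k)}(a_1\cdots a_k)\eta_i^{(k)}(b_1\cdots b_k)$ also vanish, because conditional on $\Sigma^{(k-1)}$ the two increments are either conditionally independent children of a common parent or lie on disjoint branches, and each is conditionally mean zero. Only the diagonal terms survive, giving $\mathsf{E}[Y_i] = 2\sum_{k=r+1}^{N}\sigma_k^2 = u_{a,b}^2$, independent of $i$, precisely because $(1-\delta_{a_1b_1}\cdots\delta_{a_kb_k})$ vanishes for $k\le r$ and equals $1$ for $k>r$.

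To apply Markov's theorem to $S_n := \sum_{i=1}^n Y_i$ I need $n^{-2}\mathsf{Var}[S_n]\to 0$. Writing $\mathsf{Var}[S_n] = \sum_i\mathsf{Var}[Y_i] + 2\sum_{i<j}\mathsf{cov}[Y_i,Y_j]$, the diagonal contributes $O(n)$ by the uniform bound on the fourth moments of $x_i^{(\cdot)}$, hence $O(1/n)$ after dividing by $n^2$. The off-diagonal sum is the main obstacle. Expanding $Y_i = (x_i^{(a)})^2 - 2x_i^{(a)}x_i^{(b)} + (x_i^{(b)})^2$ turns $\mathsf{cov}[Y_i,Y_j]$ into a finite linear combination of covariances of monomials in the four variables $x_i^{(a)}, x_i^{(b)}, x_j^{(a)}, x_j^{(b)}$; the key reduction is to condition on the common-ancestor $\sigma$-algebra $\Sigma^{(r)}$, under which $x^{(a)}$ and $x^{(b)}$ are conditionally independent with identical conditional laws. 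That lets me identify e.g.\ $\mathsf{E}[x_i^{(a)}x_i^{(b)}\mid\Sigma^{(r)}]=(x_i^{(a_1\cdots a_r)})^2$ and analogously for the mixed fourth-order terms, so that every summand collapses to a covariance of the form $\mathsf{cov}[(x_i^{(\cdot)})^{l_1},(x_j^{(\cdot)})^{l_2}]$ with $l_1,l_2\in\{1,2\}$, which is exactly what the hypothesis bounds by $o(n^2)$ after summation over $i<j$.

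Markov's theorem then gives $n^{-1}S_n \overset{\mathrm{P}}{\rightarrow} u_{a,b}^2$, and Theorem~\ref{th2} with $h(t)=\sqrt{t}$ upgrades this to $d_n \overset{\mathrm{P}}{\rightarrow} u_{a,b}$; since there are only finitely many pairs $(a,b)$ this yields entrywise convergence of the whole metric matrix. That the limit matrix $\{u_{a,b}\}$ is ultrametric is a purely combinatorial observation on the tree of common prefixes: for any triple $a,b,c$ the three longest-common-prefix lengths satisfy $r(b,c)\ge\min(r(a,b),r(a,c))$, with two of the three always equal to the minimum, so two of $u_{ab},u_{ac},u_{bc}$ coincide and dominate the third, yielding the strong triangle inequality.
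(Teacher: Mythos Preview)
Your architecture is exactly the paper's: compute $\mathsf{E}\bigl[(x_i^{(a)}-x_i^{(b)})^2\bigr]$, show $n^{-2}\mathsf{Var}[S_n]\to 0$, invoke Markov's theorem, then Slutsky with $h(t)=\sqrt{t}$. The one genuine methodological difference is in the expectation step: the paper establishes~(\ref{ind}) by induction on the depth~$N$, conditioning on $\Sigma^{(N-1)}$ and peeling off one level at a time, whereas you telescope $x_i^{(a)}$ into martingale increments $\eta_i^{(k)}$ and use orthogonality of martingale differences. Both are valid and yield the same formula; your decomposition is arguably the cleaner bookkeeping and makes the role of the common-prefix length $r$ transparent. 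You also add an explicit verification that the limit matrix is ultrametric via the tree structure of prefixes, which the paper simply asserts.

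There is, however, a gap in your variance step. The paper handles $\sum_{i<j}\mathsf{cov}[Y_i,Y_j]$ by expanding $(x_i-y_i)^2(x_j-y_j)^2$ and using the \emph{unconditional} independence of the two points $x^{(a)}$ and $x^{(b)}$ (stated in the hypotheses) to factor every mixed moment; after factoring, only covariances of the form $\mathsf{cov}[x_i^{l_1},x_j^{l_2}]$ with the full $N$-level index remain, and those are exactly what the hypothesis controls. Your route---conditioning on $\Sigma^{(r)}$ and replacing $\mathsf{E}[x_i^{(a)}x_i^{(b)}\mid\Sigma^{(r)}]$ by $(x_i^{(a_1\cdots a_r)})^2$---produces covariances of the \emph{intermediate-level} variables $x_i^{(a_1\cdots a_r)}$, and the theorem's covariance hypothesis is stated only for the top level $a_1\cdots a_N$; moreover, a term like $\mathsf{E}[x_i^{(a)}x_j^{(a)}\mid\Sigma^{(r)}]$ does not simplify to $x_i^{(a_1\cdots a_r)}x_j^{(a_1\cdots a_r)}$ without an extra conditional-independence assumption across coordinates. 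So the claim that ``every summand collapses to a covariance of the form in the hypothesis'' is not justified as written. The fix is easy: drop the $\Sigma^{(r)}$ detour and expand directly using the independence of the two points, exactly as the paper does.
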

\begin{proof}
First we prove by induction that
\[
\mathsf{E}\left[\left(x_{i}^{(a_{1}a_{2}\cdots a_{N})}-x_{i}^{(b_{1}b_{2}\cdots b_{N})}\right)^{2}\right]=
\]
\[
=2\left(\left(1-\delta_{a_{1}b_{1}}\delta_{a_{2}b_{2}}\cdots\delta_{a_{N}b_{N}}\right)\sigma_{N}^{2}+\right.
\]
\begin{equation}
\left.+\left(1-\delta_{a_{1}b_{1}}\delta_{a_{2}b_{2}}\cdots\delta_{a_{N-1}b_{N-1}}\right)\sigma_{N-1}^{2}+\cdots+\left(1-\delta_{a_{1}b_{1}}\right)\sigma_{1}^{2}\right).\label{ind}
\end{equation}
For the case $N=2$ we have
\[
\mathsf{E}\left[\left(x_{i}^{(a_{1}a_{2})}-x_{i}^{(b_{1}b_{2})}\right)^{2}\right]=
\]
\[
=\left(1-\delta_{a_{1}b_{1}}\delta_{a_{2}b_{2}}\right)
\mathsf{E}\left[\mathsf{E}\left[\left(x_{i}^{(a_{1}a_{2})}
-x_{i}^{(b_{1}b_{2})}\right)^{2}\left|\Sigma^{(1)}\right.\right]\right]=
\]
\[
=\left(1-\delta_{a_{1}b_{1}}\delta_{a_{2}b_{2}}\right)
\mathsf{E}\left[\mathsf{E}\left[\left(x_{i}^{(a_{1}a_{2})}\right)^{2}\left|\Sigma^{(1)}\right.\right]\right.+
\]
\[
+\left.\mathsf{E}\left[\left(x_{i}^{(b_{1}b_{2})}\right)^{2}\left|\Sigma^{(1)}\right.\right]
-2\mathsf{E}\left[x_{i}^{(a_{1}a_{2})}\left|\Sigma^{(1)}\right.\right]\mathsf{E}\left[x_{i}^{(b_{1}b_{2})}\left|\Sigma^{(1)}\right.\right]\right]=
\]
\[
=\left(1-\delta_{a_{1}b_{1}}\delta_{a_{2}b_{2}}\right)\mathsf{E}\left[\mathsf{Var}\left[\left(x_{i}^{(a_{1}a_{2})}\right)\left|\Sigma^{(1)}\right.\right]+\mathsf{Var}\left[\left(x_{i}^{(b_{1}b_{2})}\right)\left|\Sigma^{(1)}\right.\right]+\right.
\]
\[
\left.+\left(1-\delta_{a_{1}b_{1}}\delta_{a_{2}b_{2}}\right)\left(1-\delta_{a_{1}b_{1}}\right)\left(\mathsf{E}\left[\left(x_{i}^{(a_{1}a_{2})}\right)\left|\Sigma^{(1)}\right.\right]-\mathsf{E}\left[\left(x_{i}^{b{}_{1}b_{2})}\right)\left|\Sigma^{(1)}\right.\right]\right)^{2}\right]=
\]
\[
=\left(1-\delta_{a_{1}b_{1}}\delta_{a_{2}b_{2}}\right)\left(\mathsf{E}\left[\mathsf{Var}\left[\left(x_{i}^{(a_{1}a_{2})}\right)\left|\Sigma^{(1)}\right.\right]\right]+\mathsf{E}\left[\mathsf{Var}\left[\left(x_{i}^{(b_{1}b_{2})}\right)\left|\Sigma^{(1)}\right.\right]\right]\right)+
\]
\[
+\left(1-\delta_{a_{1}b_{1}}\right)\left(\mathsf{Var}\left[\mathsf{E}\left[x_{i}^{\left(a_{1}a_{2}\right)}\right]\left|\Sigma^{(1)}\right.\right]+\mathsf{Var}\left[\mathsf{E}\left[x_{i}^{\left(b_{1}b_{2}\right)}\right]\left|\Sigma^{(1)}\right.\right]\right)+
\]
\[
+\left(1-\delta_{a_{1}b_{1}}\right)\left(\mathsf{E}\left[\left(x_{i}^{(a_{1}a_{2})}\right)\right]
-\mathsf{E}\left[\left(x_{i}^{b{}_{1}b_{2})}\right)\right]\right)^{2}=
\]
\[
=2\left(\left(1-\delta_{a_{1}b_{1}}\delta_{a_{2}b_{2}}\right)\sigma_{2}^{2}+\left(1-\delta_{a_{1}b_{1}}\right)\sigma_{1}^{2}\right).
\]
We suppose that (\ref{ind}) is true for $N=k$ and then will show that (\ref{ind}) is true for $N=k+1$

\[
\mathsf{E}\left[\left(x_{i}^{(a_{1}a_{2}\cdots a_{k+1})}-x_{i}^{(b_{1}b_{2}\cdots b_{k+1})}\right)^{2}\right]=
\]
\[
=\left(1-\delta_{a_{1}b_{1}}\delta_{a_{2}b_{2}}\cdots\delta_{a_{k+1}b_{k+1}}\right)\mathsf{E}\left[\mathsf{E}\left[x_{i}^{(a_{1}a_{2}\cdots a_{k+1})}\left|\Sigma^{(k)}\right.\right]\right]=
\]
\[
=\left(1-\delta_{a_{1}b_{1}}\delta_{a_{2}b_{2}}\cdots\delta_{a_{k+1}b_{k+1}}\right)
\mathsf{E}\left[\mathsf{Var}\left[x_{i}^{(a_{1}a_{2}\cdots a_{k+1})}\left|\Sigma^{(k)}\right.\right]\right.+
\]
\[
+\left.\mathsf{Var}\left[x_{i}^{(b_{1}b_{2}\cdots b_{k+1})}\left|\Sigma^{(k)}\right.\right]\right]+
\]
\[
+\left(1-\delta_{a_{1}b_{1}}\delta_{a_{2}b_{2}}\cdots\delta_{a_{k}b_{k}}\right)\mathsf{E}\left[\left(x_{i}^{(a_{1}a_{2}\cdots a_{k})}-x_{i}^{(b_{1}b_{2}\cdots b_{k})}\right)^{2}\right]=
\]
\[
=2\left(1-\delta_{a_{1}b_{1}}\delta_{a_{2}b_{2}}\cdots\delta_{a_{k+1}b_{k+1}}\right)\sigma_{k+1}^{2}+
\]
\[
+\left(1-\delta_{a_{1}b_{1}}\delta_{a_{2}b_{2}}\cdots\delta_{a_{k}b_{k}}\right)\mathsf{E}\left[\left(x_{i}^{(a_{1}a_{2}\cdots a_{k})}-x_{i}^{(b_{1}b_{2}\cdots b_{k})}\right)^{2}\right]=
\]
\[
=2\left(\left(1-\delta_{a_{1}b_{1}}\delta_{a_{2}b_{2}}\cdots\delta_{a_{k+1}b_{k+1}}\right)\sigma_{k+1}^{2}+\right.
\]
\[
\left.+\left(1-\delta_{a_{1}b_{1}}\delta_{a_{2}b_{2}}\cdots\delta_{a_{k}b_{k}}\right)\sigma_{k}^{2}+\cdots+\left(1-\delta_{a_{1}b_{1}}\right)\sigma_{1}^{2}\right).
\]
This proves equation (\ref{ind}).

Next let us consider
\[
\mathsf{Var}\left[\sum_{i=1}^{n}\left(x_{i}^{(a_{1}a_{2}\cdots a_{N})}-x_{i}^{(b_{1}b_{2}\cdots b_{N})}\right)^{2}\right]=
\]
\[
=\sum_{i=1}^{n}\mathsf{Var}\left[\left(x_{i}^{(a_{1}a_{2}\cdots a_{N})}-x_{i}^{(b_{1}b_{2}\cdots b_{N})}\right)^{2}\right]+
\]
\[
+2\sum_{\substack{i,j=1\\
i<j
}
}^{n}\mathsf{cov}\left[\left(x_{i}^{(a_{1}a_{2}\cdots a_{N})}-x_{i}^{(b_{1}b_{2}\cdots b_{N})}\right)^{2},\left(x_{j}^{(a_{1}a_{2}\cdots a_{N})}-x_{j}^{(b_{1}b_{2}\cdots b_{N})}\right)^{2}\right]
\]
Let us denote $x_{i}^{(a_{1}a_{2}\cdots a_{N})}\equiv x_{i}$, $x_{i}^{(b_{1}b_{2}\cdots b_{N})}\equiv y_{i}$.
Then
\[
\sum_{i=1}^{n}\mathsf{Var}\left[\left(x_{i}^{(a_{1}a_{2}\cdots a_{N})}-x_{i}^{(b_{1}b_{2}\cdots b_{N})}\right)^{2}\right]=\sum_{i=1}^{n}\mathsf{Var}\left[\left(x_{i}-y_{i}\right)^{2}\right]=
\]
\[
=\sum_{i=1}^{n}\left(\mathsf{M}\left[x_{i}^{4}\right]+\mathsf{M}\left[y_{i}^{4}\right]-4\mathsf{M}\left[x_{i}^{3}\right]\mathsf{M}\left[y_{i}\right]-4\mathsf{M}\left[y_{i}^{3}\right]\mathsf{M}\left[x_{i}\right]\right)+
\]
\[
+\sum_{i=1}^{n}\left(4\mathsf{M}\left[x_{i}^{2}\right]\mathsf{M}\left[y_{i}^{2}\right]-\left(\mathsf{M}\left[x_{i}^{2}\right]\right)^{2}-\left(\mathsf{M}\left[y_{i}^{2}\right]\right)^{2}\right)+
\]
\[
+\sum_{i=1}^{n}\left(4\mathsf{M}\left[x_{i}^{2}\right]\mathsf{M}
\left[x_{i}\right]\mathsf{M}\left[y_{i}\right]+4\mathsf{M}\left[y_{i}^{2}\right]
\mathsf{M}\left[y_{i}\right]\mathsf{M}\left[x_{i}\right]\right),
\]
\[
\sum_{\substack{i,j=1\\
i<j
}
}^{n}\mathsf{cov}\left[\left(x_{i}^{(a_{1}a_{2}\cdots a_{N})}-x_{i}^{(b_{1}b_{2}\cdots b_{N})}\right)^{2},\left(x_{j}^{(a_{1}a_{2}\cdots a_{N})}-x_{j}^{(b_{1}b_{2}\cdots b_{N})}\right)^{2}\right]=
\]
\[
=\sum_{\substack
i,j=1\\
i<j}^{n}\mathsf{cov}\left[\left(x_{i}-y_{i}\right)^{2},\left(x_{j}-y_{j}\right)^{2}\right]=
\]
\[
=\sum_{\substack{i,j=1\\
i<j
}
}^{n}\left(\mathsf{cov}\left[x_{i}^{2},x_{j}^{2}\right]+\mathsf{cov}\left[y_{i}^{2},y_{j}^{2}\right]-
2\mathsf{M}\left[x_{i}\right]\mathsf{cov}\left[y_{j}^{2},y_{i}\right]-
2\mathsf{M}\left[x_{j}\right]\mathsf{cov}\left[y_{i}^{2},y_{j}\right]\right)+
\]
\[
-2\sum_{\substack
i,j=1\\
i<j}^{n}\left(\mathsf{M}\left[y_{i}\right]\mathsf{cov}
\left[x_{j}^{2},x_{i}\right]+\mathsf{M}\left[y_{j}\right]\mathsf{cov}\left[x_{i}^{2},x_{j}\right]\right)+
\]
\[
+4\sum_{\substack{i,j=1\\
i<j
}
}^{n}\left(\mathsf{cov}\left[x_{i},x_{j}\right]\mathsf{cov}\left[y_{i},y_{j}\right]+
\mathsf{cov}\left[x_{i},x_{j}\right]\mathsf{M}\left[y_{i}\right]\mathsf{M}\left[y_{j}\right]+
\mathsf{cov}\left[y_{i},y_{j}\right]\mathsf{M}\left[x_{i}\right]\mathsf{M}\left[x_{j}\right]\right).
\]
Since $\mathsf{M}\left[x_{i}\right]$, $\mathsf{M}\left[x_{i}^{2}\right]$,
$\mathsf{M}\left[x_{i}^{3}\right]$, $\mathsf{M}\left[x_{i}^{4}\right]$,
\emph{$\mathsf{cov}\left[x_{i},x_{j}\right]$} are uniformly bounded
and since
\[
\dfrac{1}{n^{2}}\sum_{\substack{i,j=1\\
i<j
}
}^{n}\mathsf{cov}\left[x_{i}^{2},x_{j}^{2}\right]\rightarrow0,\;\dfrac{1}{n^{2}}\sum_{\substack{i,j=1\\
i<j
}
}^{n}\mathsf{cov}\left[x_{i}^{2},x_{j}\right]\rightarrow0,\;\dfrac{1}{n^{2}}\sum_{\substack{i,j=1\\
i<j
}
}^{n}\mathsf{cov}\left[x_{i},x_{j}\right]\rightarrow0
\]
as $n\rightarrow\infty$ we get
\[
\dfrac{1}{n^{2}}\mathsf{Var}\left[\sum_{i=1}^{n}\left(x_{i}^{(a_{1}a_{2}\cdots a_{N})}-x_{i}^{(b_{1}b_{2}\cdots b_{N})}\right)^{2}\right]\rightarrow0.
\]
Then, by Markov's theorem,
\[
\dfrac{\sum_{i=1}^{n}\left(x_{i}^{(a_{1}a_{2}\cdots a_{N})}-x_{i}^{(b_{1}b_{2}\cdots b_{N})}\right)^{2}}{n}\overset{\mathrm{P}}{\longrightarrow}2\left(\left
(1-\delta_{a_{1}b_{1}}\delta_{a_{2}b_{2}}\cdots\delta_{a_{N}b_{N}}\right)\sigma_{N}^{2}+\right.
\]
\[
\left.+\left(1-\delta_{a_{1}b_{1}}\delta_{a_{2}b_{2}}\cdots\delta_{a_{N-1}b_{N-1}}\right)
\sigma_{N-1}^{2}+\cdots+\left(1-\delta_{a_{1}b_{1}}\right)\sigma_{1}^{2}\right).
\]
Next, by Slutsky's theorem,
\[
d_{n}\left(x_{i}^{(a_{1}a_{2}\cdots a_{N})}-x_{i}^{(b_{1}b_{2}\cdots b_{N})}\right)\overset{\mathrm{P}}{\longrightarrow}\sqrt{2}\left(\left(1-\delta_{a_{1}b_{1}}\delta_{a_{2}b_{2}}\cdots\delta_{a_{N}b_{N}}\right)\sigma_{N}^{2}+\right.
\]
\[
\left.+\left(1-\delta_{a_{1}b_{1}}\delta_{a_{2}b_{2}}\cdots\delta_{a_{N-1}b_{N-1}}\right)
\sigma_{N-1}^{2}+\cdots+\left(1-\delta_{a_{1}b_{1}}\right)\sigma_{1}^{2}\right)^{\tfrac{1}{2}}.
\]
This completes the proof of the main theorem.
\end{proof}

\section{Conclusions}

The present paper is an extension of the author's paper \cite{Z1},
in which a~procedure for constructing finite ultrametric spaces was proposed based on the generation of
a~finite number of independent random points with independent coordinates in~$\mathbb R^{n}$. It was shown that,
for a~special class of laws of distributions of points in~$\mathbb R^{n}$, the normalized matrix of Euclidean distances on the set of points converges
in probability as $n \to \infty$ to the ultrametric matrix.  In the present paper, we extend the result of~\cite{Z1} to the case when the coordinates of
random points are statistically dependent. Our main result is Theorem~\ref{th3} of Section~4, which states that, under a~number of conditions on the
expectations of the variance and the covariance of coordinates of random points, the matrix of Euclidean distances of a~random realization of a~finite number of points
tends in probability as $n \to \infty$ to the ultrametric matrix. The proof of Theorem~\ref{th3} depends, in particular, on the law of large numbers
in the form of Markov's theorem. We obtain the explicit form of this ultrametric matrix and show that it is completely determined by the expectations of
the conditional variances of the coordinates of points. The paper also contains two illustrative examples obtained by computer simulation of random points
in Euclidean spaces of large dimensions. These examples illustrate the working principle of the theorem for specific distribution of random points in two cases:
when the coordinates of points are dependent and when they are independent.

An interesting questions is how the mechanism of generation of the ultrametric considered above may manifest itself in
processing of data sets pertaining to real systems. In the present paper (as in the previous one) we do not pose the problem of
describing the real systems in which the above scenario of origination of ultrametric structures admits an exact realization. Nevertheless,
we may adduce some simple but general arguments supporting the idea that
the realization of the proposed scenario may take place in feature sets of real objects. Assume that we are given some set of
homogeneous objects. Assume that each object is assigned a~feature set which can be described by a~point in a~multivariate coordinate space  $\mathbb R^{n}$,
where $n$~is sufficiently large. We also assume that the feature sets are, in general, statistically independent random variables and that a~specific
distribution law of a~feature vector of each object depends on several external factors, which, in turn, depend upon a~certain set of random parameters.
We pose a~problem of classifying some sample of such object, whose solution is based upon comparing normalized Euclidean distances between the objects.
We suppose that a~specific realization of a~sample of objects is such that all objects from a~sample can be subdivided into subfamilies, which satisfy the following condition:
for each subfamily of objects all random parameters corresponding to external random factors have the same realization, whereas for distinct subfamilies of objects
the random parameters corresponding to external random factors have different realization. It is easily seen that under the above assumptions the principal conditions of
Theorem~\ref{th3} should be satisfied. Moreover, if the hypotheses on the expectations, variances and covariances of coordinates (features of objects) of Theorem~\ref{th3}
are also satisfied, then one may expect with large probability that the metric matrix of normalized Euclidean distances on the space of sample object features
is close to the ultrametric one. In this case, this will result in a~clusterization of objects from different subfamilies of the sample in terms of their proximity with respect to the
Euclidean metric.

It is especially noteworthy that in constructing the metric matrix on the set of random points in  $\mathbb R^{n}$ we actually used the Euclidean metric in~$\mathbb R^{n}$.
Nevertheless, such a~choice of a~metric is not unique.
For example, the distance between random points in  $\mathbb R^{n}$ can be measured with respect to the normalized Minkowski metric
$d(x,y)=\left (\frac{1}{n}\sum_{i=1}^{n}\left |x_i-y_i \right |^{\alpha}\right )^{\frac{1}{\alpha}}$,
which is the Hamming metric with $\alpha =1$, the Euclidean metric with $\alpha =2$, and the Chebyshev metric with $\alpha \to \infty$.
It is becomes an interesting question to determine the conditions for various~$\alpha$  imposed on the distribution of points in~$\mathbb R^{n}$
under which the metric matrix of their random realization will tend in probability to the ultrametric matrix as $n \to \infty$.
We hope to examine this question in the nearest future.

\vspace{5mm}
{\bf Acknowledgements}

The author is deeply indebted to Prof.\ I.\,V.~Volovich (Steklov Mathematical Institute,
Russian Academy of Sciences) and Prof.\ Fionn Murtagh (School of Computer Science and Informatics, De Montfort University)
for their  careful reading of the manuscript, discussion of the results, and a~number of useful comments.
The author is also grateful to Prof.\ A.\,R.~Alimov (Faculty of Mechanics and Mathematics,
Moscow State University) for his assistance in preparing the manuscript  and a~number of helpful comments.

\smallskip

The work was partially supported by the Ministry of Education and Science of of the Russian Federation under the Competitiveness Enhancement Program of SSAU for 2013--2020.





\bibliographystyle{elsarticle-num}



\end{document}